\documentclass[conference,10pt]{IEEEtran}
\usepackage{cite}
\usepackage[dvips]{graphicx}
\usepackage{color}
\usepackage{amstext}
\usepackage{hyperref}
\usepackage{amsfonts}
\usepackage{amsmath,amssymb}
\usepackage{psfrag}
\usepackage{fancyhdr}

\usepackage{mathtools}
\usepackage{amsthm}

\usepackage{subfigure}
\usepackage{epsfig}
\usepackage{csquotes}
\MakeOuterQuote{"}

\usepackage[linesnumbered,ruled,vlined]{algorithm2e}

\theoremstyle{definition}
\newtheorem{defn}{Definition}

\newtheorem{lem}{Lemma}

\usepackage{enumerate}
\usepackage{bbm}
\usepackage{tabularx}
\usepackage{etoolbox}

\setlength{\abovedisplayskip}{3pt}
\setlength{\belowdisplayskip}{3pt}
\setlength{\abovecaptionskip}{-3pt}

\let\bbordermatrix\bordermatrix
\patchcmd{\bbordermatrix}{8.75}{4.75}{}{}
\patchcmd{\bbordermatrix}{\left(}{\left[}{}{}
\patchcmd{\bbordermatrix}{\right)}{\right]}{}{}



\begin{document}

\title{Dynamic Proximity-aware Resource Allocation in Vehicle-to-Vehicle (V2V) Communications}
\author{
\IEEEauthorblockN{Muhammad Ikram Ashraf\IEEEauthorrefmark{1},  Mehdi Bennis\IEEEauthorrefmark{1}, Cristina Perfecto\IEEEauthorrefmark{2}, and Walid Saad\IEEEauthorrefmark{3} \\}
\IEEEauthorblockA{\small\IEEEauthorrefmark{1}Centre for Wireless Communications, University of Oulu, Finland, Emails: \{ikram,bennis\}@ee.oulu.fi \\
\IEEEauthorrefmark{2}University of the Basque Country - UPV/EHU, Spain, Email: {cristina.perfecto@ehu.eus} \\
\IEEEauthorrefmark{3}Wireless@VT, Bradley Department of Electrical and Computer Engineering, Virginia Tech, Blacksburg, USA, Email: walids@vt.edu}
}
\maketitle
\begin{abstract}
In this paper, a novel proximity and load-aware resource allocation for vehicle-to-vehicle (V2V) communication is proposed. The proposed approach exploits the spatio-temporal traffic patterns, in terms of load and vehicles' physical proximity, to minimize the total network cost which captures the tradeoffs between load (i.e., service delay) and successful transmissions while satisfying vehicles's quality-of-service (QoS) requirements. To solve the optimization problem under slowly varying channel information, it is decoupled the problem into two interrelated subproblems. First, a dynamic clustering mechanism is proposed to group vehicles in zones based on their traffic patterns and proximity information. Second, a matching game is proposed to allocate resources for each V2V pair within each zone. The problem is cast as many-to-one matching game in which V2V pairs and resource blocks (RBs) rank one another in order to minimize their service delay. The proposed game is shown to belong to the class of matching games with \emph{externalities}. To solve this game, a distributed algorithm is proposed using which V2V pairs and RBs interact to reach a stable matching. Simulation results for a Manhattan model shown that the proposed scheme yields a higher percentage of V2V pairs satisfying QoS as well as significant gain in terms of the signal-to-interference-plus-noise ratio (SINR) as compared to a state-of-art resource allocation baseline.
\end{abstract}

\IEEEpeerreviewmaketitle
\section{Introduction}
\label{sec:intro}
Recently, vehicle-to-vehicle (V2V) communication has attracted significant attention as an enabler for intelligent transportation systems (ITS) \cite{GA2013}. Such applications typically require efficient proximity-aware cooperation between vehicles. These trends pose significant challenges to the underlying communication system, as vehicles must now reliably transmit their information and coordinate under latency constraints \cite{METIS}. For instance, the EU project METIS outlined that a maximum end-to-end latency of $5$~ms along with a transmission reliability of $99.999\%$ of $1600$ bytes packets should be guaranteed to deliver for such traffic safety applications \cite{METIS}.

Typically non-line-of-sight (NLOS) and line-of-sight (LOS) scenarios are studied in vehicular communications. In an urban or Manhattan model, the number of obstacles such as buildings and trees are considered as NLOS, as opposed to the freeway traffic model in which the environment is considered as LOS. Current legacy solutions for V2V communications rely on ad-hoc communications over the IEEE 802.11p standard \cite{AV2012}. Due to the dynamic nature of the vehicular communications environment and the stringent quality of service (QoS) requirements such as reliability and latency, these legacy solutions are inadequate for emerging V2V services \cite{AV2012, WX2014}. Therefore, there is a strong desire for finding better wireless networking solutions to address the challenges in V2V communications.

To address these challenges, a number of recent works have emerged focusing on V2V resource allocation \cite{BBai2011, RZhang2013, FChiti2015, MBos2014}. In \cite{BBai2011}, a low-complexity outage-optimal distributed channel allocation is proposed for V2V communications based on bipartite matching. In \cite{RZhang2013}, the authors proposed interference graph-based resource sharing schemes for resource allocation in order to enhance the network throughput. The work in \cite{FChiti2015} proposes a context-aware clustering mechanism in which coalitional game theory is used to optimize cooperation among vehicles. The authors in \cite{MBos2014} proposed a heuristic location depended resource allocation approach in which fixed number of RBs are assigned to physically disjoint zones. Furthermore, a fixed set of RBs is allocated to each zone where an RB is exclusively reused by a single vehicle user equipment (V-UE). Most of the these works on V2V resource allocation require continuous channel information for assignment of the resources to vehicles, limited by the densities of vehicles in order to reduce interference and not address the resource allocation in V2V \cite{BBai2011}, \cite{RZhang2013} and \cite{FChiti2015}, respectively.

The main contribution of this paper is to develop distributed self-organizing resource allocation mechanism for V2V communication with strict reliability requirements. The proposed approach ensures continuous (or periodic) transmission opportunities for vehicular applications such as autonomous safety services in the V2V underlay, while reducing control overhead and interference to other vehicles. Due to the localized nature of these services, we proposed a spatial and temporal resource reuse. Unlike previous works such as \cite{WX2014} and \cite{MBos2014}, our proposed resource allocation scheme takes into account the dynamic nature of vehicles' traffic patterns and their geographical information, instead of a fixed resource allocation to each location which does not changed over time \cite{MBos2014}. Due to localized nature of traffic safety applications, a new dynamic zone formation approach is proposed that allows vehicles to dynamically optimize their resource allocation, depending on the current channel quality information (CQI), reliability requirements and network topology. In such V2V networks, performing dynamic approaches require the knowledge of the entire network which incur significant overhead. Therefore, the proposed schemes allows grouping vehicle pairs into dynamic zones within which each such pair can coordinate their transmission. Moreover, we propose a novel intra-zone coordination mechanism in order to allocate resources to each vehicle pair while minimizing a cost function which captures successful transmissions and traffic load. To this end, we cast the problem as many-to-one matching game per zone with externalities in which the resources blocks (RBs) and V-UEs are the players, which rank one another based on set of preferences seeking suitable and stable allocation. To solve this game, we propose a distributed algorithm that allows RBs and V-UEs to self-organize and to maximize their own utilities within their respective zones. In addition, the proposed algorithm is shown to converge to a stable matching in which no player has an incentive to match to other player, even in the presence of externalities. Simulation results validate the effectiveness of the proposed approach and show significant performance gains compared to the baseline state-of-art approach.

The rest of this paper is organized as follows: In Section II, we present the system model and problem formulation. Section III introduces the proposed scheme for dynamic zone formation and intra-zone coordination. In Section IV, the proposed resource allocation problem is posed as matching game. Simulation results are presented in Section V. Finally, conclusions are drawn in Section VI.
\section{System Model and Problem Formulation}
\label{sec:2}
\subsection{System Model}
We consider a Manhattan grid layout in which $K$ V-UE pairs (counted in terms of transmitters) share resources. Let $\mathcal{K} = \{1, \ldots K\}$ and $\mathcal{N} = \{1, \ldots N\}$ be the set of V-UE pairs and the set of orthogonal resource blocks (RBs). We assume that all V-UEs are under the coverage of a single roadside unit (RSU). Each V-UE pair $k \in \mathcal{K}$ uses one RB whereas one RB may be shared among multiple V-UEs and cause interference. The considered scenario is illustrated in Fig \ref{fig:zones}. We further assume that each V-UE pair has a QoS requirement based on its individual packet size $1/\mu_k(t)$ and packet arrival rate $\lambda_k(t)$. In this respect, the traffic influx rate $\phi_{k}(t) = \lambda_{k}(t)/\mu_{k}(t)$ and data rate $R_{n,k}(t)$ of $k$ V-UE pair over RB $n \in \mathcal{N}$ at time slot $t$. The \emph{time load} over RB $n$ is the fractional time in slot $t$ required to deliver the requested traffic for V-UE $k$ which is defined as:
\begin{equation}
 \label{eq:time-load}
     \rho_{n,k}(t) := \frac{\phi_{k}(t)}{R_{n,k}(t)}.
\end{equation}
\begin{figure}[t]
\centering
\includegraphics[width=0.98\linewidth]{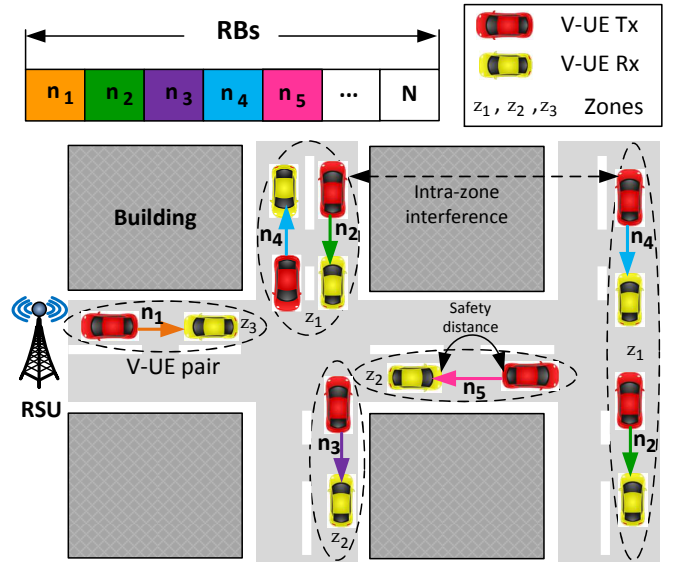}
\caption{V2V communications scenario for a Manhattan grid model with zone formation $\mathcal{Z} = \{z_1, z_2,z_3\}$. If two V-UE pairs in close proximity use the same RB, interference can be very high.}
\label{fig:zones}
\vspace{-0.5cm}
\end{figure}
Note that, for any V2V pair, a successful transmission implies that the transmitter of a V-UE pair delivers the traffic to its respective receiver, i.e., $\rho_{n,k}(t) \in (0,1)$ for any V-UE pair $k \in \mathcal{K}$. Since fast moving vehicles induce shorter channel coherence times, frequent gathering CQI is not feasible thus, a CQI that periodically varies over a time period $T$ is considered at the RSU level. Therefore, the expected time load estimation $\bar{\rho}_{k}$ for V-UE $k$ is given by:
\begin{align}
\label{eq:exp-time-load}
     \textstyle \rho_{k}(t) & =\textstyle \frac{1}{|\mathcal{N}|} \sum\limits_{n=1}^{N} \rho_{n,k}(t).
\end{align}
\begin{align}
\label{eq:exp-time-load-1}
	\textstyle \bar{\rho}_k \triangleq \frac{1}{T} \sum\limits_{t=1}^{T} \rho_{k}(t).
\end{align}
The achievable data rate for V-UE pair $k \in \mathcal{K}$ over RB $n$ at location $\boldsymbol{x}$ is written as:
\begin{equation}
      \label{eq:rate}
	        R_{n,k}(t) = \omega_{n} \log_2 \big( 1 + \gamma_{n,k}(t) \big),
\end{equation}
where $\omega_n$ is the bandwidth of RB $n \in \mathcal{N}$. The achievable SINR $\gamma_{n,k}(t)$ at V-UE pair $k \in \mathcal{K}$ over RB $n$ at location $\boldsymbol{x}$ at time $t$ is:
\begin{equation}
    \label{eq:sinr}
		 \gamma_{n,k}(t)= \frac{P_{n,k} h_{n,k}(t) } {\sum_{k' \in \mathcal{K} \setminus k} \; P_{n,k'} h_{n,k'}(t) + N_0},
\end{equation}
where $P_{n,k}$ is the transmission power and $h_{n,k}(t)$ is the channel gain of V-UE pair $k$ over RB $n$, while $N_0$ is the noise spectral density. The interference term in the denominator represents the aggregate interference at V-UE $k$ caused by the transmissions of other V-UEs $k' \in \mathcal{K} \setminus \{k\}$ on same RB. Here, for tractability, we assume that RB $n$ allows V-UE $k$ to transmit if the experienced $\gamma_{n,k}(t)$ exceeds a target SINR threshold $\overline{\gamma}_{n,k}(t)$. Otherwise, V-UE $k$ cannot transmit on RB $n$, such that $\overline{\gamma}_{n,k}(t)$ is equal for each V-UE pair. Due to the fact that vehicles are continuously changing their location and generating different traffic load over time, it is not practical to perform resource allocation at each time instance.
\subsection{Problem Formulation}
Using a central controller to allocate resources for every V2V pair is impractical due to fast-varying CQI and stringent reliability constraints. Our objective is to propose an efficient and self-organizing resource allocation approach while eliminating the frequent collection of CQI information. In this section, we study the problem of V-UEs zone formation and flexible resource allocation by incorporating vehicles' geographical information and traffic patterns. Then, we use the framework of matching theory \cite{YGu2015}, to develop a distributed and self-organizing solution. Whenever an RB is allocated to V-UE pair, it should maintain QoS requirement via a resource allocation process. For proper resource allocation V-UE pairs needs to coordinate with the rest of the network through a central controller and thus, incurs large information exchange among vehicles. Therefore, a number of V-UE pairs are grouped into sets of zones\footnote{The terms zone and cluster are used interchangeably.} based on mutual interference and traffic load. In view of above, we let $\mathcal{Z}$ be the set of zones. Each zone $z \in \mathcal{Z}$ is of dynamic size and changes over time according to proximity-based information and traffic load. Let, $\eta_z$ be the per-zone resource allocation variable which represents the mapping (matching) of V-UE pair $k$ and RB $n$ within zone $z$. Each zone $z$ consists of temporally and spatially coupled V-UEs due to mutual interference. Let $S_z$ be the total number of V-UE pairs satisfying the target SINR inside zone $z \in \mathcal{Z}$. Here, we let $\mathcal{K}_z$ denote the set of V-UEs belonging to zone $z$ and $\mathcal{N}_z$ denote the set of orthogonal resource blocks assigned to zone $z$. In each zone $z$, V-UEs efficiently reuse resources while simultaneously satisfying their QoS. Moreover, the allocation vector for each zone are collected in a vector $\boldsymbol{\eta} = [\eta_{z_1}, \ldots  \eta_{|\mathcal{Z}|}]$. Hereinafter, we refer to $\boldsymbol{\eta}$ as the ``network wide matching'' of all zones. The total expected time load of each zone is calculated as the aggregated load of each zone member $\bar{\rho}_z(\eta_z)= \sum_{k \in \mathcal{K}_z}\bar{\rho}_{k}$. We define the per-zone cost $\Gamma_z(\eta_z)$ for a given mapping $\eta_z$ by:
\begin{equation}
    \label{eq:zone-cost}
        \Gamma_z (\eta_z) =  \frac{[\bar{\rho}_z (\eta_z)]^\alpha}{[\frac{S_z}{K_z}]^\beta},
\end{equation}
where the coefficients $\alpha$ and $\beta > \alpha$ are weight parameters that indicate the impact of the load and number of satisfied V-UE pairs. Our objective is to minimize the network cost, which is given by the following optimization problem:
\begin{subequations}
        \label{eq:mainprb}
            \label{eq:main-prb}
            \begin{eqnarray}
            \underset{\{ \boldsymbol{\eta}, \mathcal{Z} \} }{\text{minimize}} && \sum_{z \in \mathcal{Z}} \Gamma_z (\eta_z), \\
            \text{subject to}
            && |\mathcal{K}_z| \geq 1,\; \forall z \in \mathcal{Z} \label{eq:main-prb-2} \\
            &&  \mathcal{K}_{z} \cap \mathcal{K}_{z'} = \emptyset, \; \forall z,z' \in \mathcal{Z}, \; z \neq z', \label{eq:main-prb-3} \\
            && \sum_{n=1}^{N} \eta_z(k,n) \leq 1, \; \forall k \in \mathcal{K}_z, \label{eq:main-prb-4} \\
            && \bigcup_{\forall k \in \mathcal{K}_z} \eta_z(k,n) \leq |\mathcal{N}_z|, \; \forall z, \forall n. \label{eq:main-prb-5}
            \end{eqnarray}
\end{subequations}
Here, constraints \eqref{eq:main-prb-2} and \eqref{eq:main-prb-3} ensure that any V-UE is part of one zone only. \eqref{eq:main-prb-4} captures the fact that a given V-UE pair $k$ can be matched to only one RB $n$ while an, RB $n$ can be matched to one or more V-UEs for a given matching $\eta_z$. Constraint \eqref{eq:main-prb-5} states that V-UE pairs can reuse the resources within set $\mathcal{N}_z$.
\section{Dynamic Zone Formation}
The centralized optimization problem in \eqref{eq:main-prb} is challenging to solve as it is combinatorial in nature. To that end, a decentralized approach based on minimal coordination between neighboring V-UEs is proposed. First, a dynamic zone formation mechanism is devised which incorporates, both geographical proximity of V-UEs and their traffic nature (e.g., traffic load, interference). Let $G = (\mathcal{K},\mathcal{E})$ be an undirected graph, where $\mathcal{K}$ is the set of V-UE pairs and $\mathcal{E} \subset \mathcal{K} \times \mathcal{K}$ is the set of links between locally-coupled V-UE pairs in terms of distance and traffic load.
\vspace{-0.1cm}
\subsection{Neighborhood based Gaussian similarity }
Given the graph $G = (\mathcal{K},\mathcal{E})$, let $\boldsymbol{v}_k$ and $\boldsymbol{v}_{k'}$ be the geographical coordinates of the V-UE pairs $k$ and $k'$ respectively. The geographical locations of vehicles do not change significantly during time $1 \leq t \leq T$. Therefore, average locations of vehicles are considered. Here, we use parameter $\epsilon_d$ to represent the presence of a link or edge $e \in \mathcal{E}$ between V-UE pairs $k$ and $k'$. The Gaussian distance similarity is based on the distance between two V-UEs $k$ and $k'$ as:
\begin{equation}
    \label{eq:distance-clustering}
	d_{k,k'} = \begin{cases}	\exp \bigg( \frac{-||\boldsymbol{v}_{k} - \boldsymbol{v}_{k'}||^2}{2\sigma_d^2} \bigg), & \mbox{if $ \lVert \boldsymbol{v}_{k} - \boldsymbol{v}_{k'} \rVert \leq \epsilon_d$},\\
              0, & \mbox{otherwise}.
               \end{cases}
\end{equation}
Here, the parameter $\sigma_d$ controls the impact of the neighborhood size. For a given $\epsilon_d$ is the range of the Gaussian distance similarity for any two close-by V-UE pairs is $[e^{-\epsilon_d/2\sigma_d^2},1]$, where the lower bound is determined by $\sigma_d$. Furthermore, $\boldsymbol{D}$ represents the distance based similarity matrix with each entry $d_{k,k'}$ given by \eqref{eq:distance-clustering}.
\vspace{-0.1cm}
\subsection{Load based similarity}
Comparing the variation in geographical location of V-UEs pairs in (\ref{eq:distance-clustering}), the traffic patterns of V-UEs varies more frequent over time. Therefore, grouping V-UEs pairs based on their temporal traffic aspects provides more dynamic manner of zone formation. To provide a notion of how similar two V-UE pairs are in terms of expected traffic load over time, we use the cosine similarity matrix $\boldsymbol{C}$.  Here, we consider a periodic time interval $T$ during which the expected time load $\rho_k(t)$ of each V-UE pair is observed. We build a expected traffic load vector $\boldsymbol{\rho_k} = [ \rho_k(1), \ldots \rho_k(T) ]$ in order to calculate the cosine similarity between two V-UEs $k$ and $k'$ as:
\begin{equation}
    \label{eq:load-clustering}
		c(k,k') = \frac{\boldsymbol{\rho_k} \cdot \boldsymbol{\rho_{k'}}} {\lVert \boldsymbol{\rho_k} \rVert \lVert \boldsymbol{\rho_{k'}} \rVert},
\end{equation}
The value of $c(k,k')$ is used to as the $(k,k')$-th entry of the load-based similarity matrix $\boldsymbol{C}$.
\vspace{-0.1cm}
\subsection{Combining the similarities}
The key step in zone formation is to identify similarities between V-UEs pairs to group V-UEs with similar characteristics. In order to reuse resource within a given zone grouping V-UEs pairs having different traffic and physically apart from each other, we combine load and distance similarity. An affinity matrix $\boldsymbol{A}$ that blends time-average load affinity with spatial proximity is:
    \begin{equation}
        \label{eq:affinity-matrix}
        \boldsymbol{A} = \theta(1 - \boldsymbol{C}) + (1-\theta)(1 - \boldsymbol{D}),
    \end{equation}
where $\theta$ control the impact of load similarity and distance, respectively. Algorithm \ref{algo:algo1} describes the proposed V-UE zone clustering method. Grouping dissimilar V-UEs in terms of traffic and geographical distance, mitigates interference and thus, minimizes the total network cost. Moreover, the Hare-Niemeyer method is used for calculating the set of $\mathcal{N}_z$ for each zone \cite{HeraQuota}. However, solving \eqref{eq:main-prb} requires knowledge of all zones in the network, which can be complex and not practical. Therefore, next, section we propose a distributed self-organizing solution for flexible V-UE-RB allocation based on intra-zone coordination.
\begin{algorithm}[t]
\footnotesize
\caption{Spectral clustering for zone formation}
\label{algo:algo1}
\DontPrintSemicolon 
\textbf{Initialization}: Pick a time window of length $T$, calculate affinity matrix $\boldsymbol{A} = [a(i,j)]$ as in \eqref{eq:affinity-matrix} of a graph $G$, choose $b_{\text{min}}= 2$ and $b_{\text{max}} = K/2$\;
Compute diagonal degree matrix $\boldsymbol{M}$ with diagonal $m_i = \sum_{j=1}^{K} a_{i,j}$. \;
$\boldsymbol{L} = \boldsymbol{M} - \boldsymbol{A}$ \;
$\boldsymbol{L}_{\textrm{norm}} := \boldsymbol{M} ^{-1/2} \boldsymbol{L} \boldsymbol{M} ^ {-1/2}$. \;
Pick a number of $b_{\text{max}}$ eigenvalues of $\boldsymbol{L}_{norm}$ such that $\lambda_1 \leq, \ldots, \leq \lambda_{k_{\text{max}}}$. \;
Choose $B = {\text{max}}_{i=b_{\text{min}}, \ldots, b_{\text{max}}} \Delta_i$ where $\Delta_i = \lambda_{i+1}- \lambda_i $. \;
Choose the $b$ smallest eigenvectors $x_1, ..., x_b$ of $\boldsymbol{L}_{\textrm{norm}}$. \;
Let $\boldsymbol{Y}$ matrix has the eigenvectors $x_1, \ldots x_b$ as columns. \;
Use k-means clustering to cluster (zone) the rows of matrix $\boldsymbol{Y}$. \;
Zone set $\{1, ..., Z_{|\mathcal{Z}|}\}$. \;
\end{algorithm}
\vspace{-0.1cm}
\section{Intra-zone resource Allocation as Matching game with Externalities}
\label{sec:3}
Our objective is to develop a self-organizing mechanism to solve the resource allocation problem in \eqref{eq:main-prb} using a decentralized, approach in which V-UEs and RSU-controlled RBs interact and make resource allocation decision based on their utilities. To this end, the framework of matching theory \cite{YGu2015} is a promising approach for resource management in V2V communication. Prior to defining the game, we will formulate the load-aware utility functions for both players (V-UEs and RBs) to optimize the resource allocation mechanism.
\subsection{Utilities of V-UEs and RBs}
For RB allocation, each V-UE needs to identify the set of RBs that ensure QoS requirements and reduce their load. Thus, we define a suitable load-aware utility function for any V-UE $k \in \mathcal{K}_z$ over RB $n \in \mathcal{N}_z$ for a given matching $\eta_z$ as follows:
\begin{align}
\label{eq:utility-vue}
       U_{n,k}(\eta_z) & = -{\rho}_{n,k}(\eta_z).
\end{align}
Consequently, the utility of an RB is the sum of utilities of V-UE pairs using that RB for the given matching $\eta_z$:
\begin{align}
\label{eq:utility-rb}
     U_{n}(\eta_z)  & = \sum_{k \in \mathcal{K}_z} U_{n,k}(\eta_z).
\end{align}
\indent
Finally, we define a zone-wide utility for zone $z$ as per the given matching $\eta_z$ as $ W_z(\eta_z) = -\Gamma_z(\eta_z) $. Having defined such utilities, our goal is to maximize the zone-wide utility to solve \eqref{eq:main-prb} in which each V-UE $k \in \mathcal{K}_z$ is assigned to an RB $n \in \mathcal{N}_z$ via a matching $\eta_z: \mathcal{K}_z \rightarrow \mathcal{N}_z$.
\begin{defn}
\label{def:1}
A \emph{matching game} is defined by two sets of players ($\mathcal{K}_z, \mathcal{N}_z$) and two preference relations $\succ_k$ and $\succ_n$ for each V-UE $k \in \mathcal{K}_z$ to build its preference over RB $n \in \mathcal{N}_z$ and vice-versa in a zone $z$.
\end{defn}
The outcome of the matching game is the allocation mapping $\eta_z$ that matches each player $k \in \mathcal{K}_z$ to player $n=\eta_z(k)\; n\in \mathcal{N}_z$ and vice versa such that $k = \eta_z(n)\;, k \in \mathcal{K}_z$.
\setlength{\textfloatsep}{1pt}
\LinesNumberedHidden{
\begin{algorithm}[t]
\footnotesize
\caption{Dynamic Proximity-aware resource allocation algorithm}
\label{algo:algo2}
\DontPrintSemicolon 
\KwData{Each V-UE $k$ is initially allocated to a randomly selected RB $n$.}
\KwResult{Convergence to a stable matching $\boldsymbol{\eta}$.}

\While{$t \leq  T_{\text{max}}$} {

\textbf{Phase I - Load and SINR computation;}

\begin{itemize}
    \item Calculate SINR and time load over each V-UE $k$ at time $t$ using \eqref{eq:time-load} and \eqref{eq:sinr};
\end{itemize}
\If{ $t/T \in \mathbb{N}$}{
\textbf{Phase II - Zone formation between V-UE pairs;}
\begin{itemize}
    \item Calculate gaussian distance and cosine load dissimilarity metrics \\ using \eqref{eq:distance-clustering}, \eqref{eq:load-clustering};
    \item Affinity matrix computed using \eqref{eq:affinity-matrix};
    \item Zones $|\mathcal{Z}|$ are formed among V-UE pairs using Algorithm \ref{algo:algo1};
\end{itemize}

\textbf{Phase III - RB Allocation and utility computation;}
\begin{itemize}
    \item Calculate set of RBs for each zone $\mathcal{N}_z$ \cite{HeraQuota}.
	\item $U_{n,k}(\eta_z)$, $U_n(\eta_z)$ and zone-wide utility $W_z(\eta_z)$.
\end{itemize}

\textbf{Phase IV - Swap-matching evaluation;}

   \While{$z \leq \max(|\mathcal{Z}|)$} {
   \begin{itemize}
	   \item Pick a random pair of V-UEs $\{k,\tilde{k}\} \in \mathcal{K}_z$ within $z$ zone;
    \end{itemize}

    	\While{$count \leq  count_{\text{max}}$} {
    	\begin{itemize}
    			 \item $U_{n,k}(\eta_z)$, $U_n(\eta_z)$ are updated;
   				    \item swap the pair of UEs $\eta_z(\{k,\tilde{k}\} \{\tilde{k},k\})$
    				\item $ W_z(\eta_z) = W_z(\eta_z)_{\text{best}}$
    	\end{itemize}
    \If{$ W_z(\eta_z) > W_z(\eta_z)_{\text{best}}$ }{\
         $W_z(\eta_z)_{\text{best}} = W_z(\eta_z)$ \;
        }
    \Else{
      RB $n$ refuses the proposal, and V-UE $k$ sends a proposal to the next configuration at $count$ \;
    }
		$count = count + 1$
    }
    $z = z + 1$
}
\textbf{end if}}
$t = t + 1$
}
\textbf{Phase V - Stable matching}
\end{algorithm}}
A \emph{preference relation} $\succ$ is defined as a reflexive, complete and transitive binary relation between players in $\mathcal{K}_z$ and $\mathcal{N}_z$. Thus, a preference relation $\succ_k$ is defined for every V-UE $k \in \mathcal{K}_z$ over the set of RBs $\mathcal{N}_z$ such that for any two nodes in $n,\tilde{n} \in \mathcal{N}^{2}_z, n \neq \tilde{n} $ and two matchings $\eta_{z}, \eta'_{z} \in \mathcal{K}_z \times \mathcal{N}_z, \eta_z \neq \eta'_z \;, n =\eta_z(k)\;, \tilde{n}=\eta'_{z}(k)$:
\begin{align}
    \label{eq:vue-pref}
   & (n,\eta_z) \succ_k (\tilde{n},\eta'_z)  \Leftrightarrow U_{n,k}(\eta_z) > U_{\tilde{n},k}(\eta'_z),
\end{align}
where $(n,k) \in \eta_z$ and $(\tilde{n},k) \in \eta'_z$. Similarly the preference relation $\succ_n$ for RB $n$ over the set of V-UEs $\mathcal{K}_z$ is defined such that for any two V-UEs $k,\tilde{k} \in \mathcal{K}_z, k \neq \tilde{k}\;, k =\eta_z(n)\;, \tilde{k}=\eta'_{z}(n)$:
\begin{equation}
 \label{eq:rb-pref}
    (k,\eta_z) \succ_n (\tilde{k}, \eta'_z) \Leftrightarrow U_{n}(\eta_z) > U_{\tilde{n}}(\eta'_z).
\end{equation}
Each V-UE and RB independently rank one another based on the utilities in \eqref{eq:utility-vue} and \eqref{eq:utility-rb}. However the selection preferences of V-UE \eqref{eq:vue-pref} and RB \eqref{eq:rb-pref} are \emph{interdependent} and influenced by the existing network wide matching, which leads to a many-to-one matching game. Such effects which dynamically change the preference of each player in the network, are called externalities. In particular, the considered game is a matching game with \emph{externalities} due to mutual interference between vehicles, which differs from classical applications of matching theory in wireless such as those in \cite{EA2011} and \cite{SBayat2012}. Indeed, these existing approaches do not account for externalities, and, thus, they may yield lower utilities or may not converge.
\vspace{-0.2cm}
\subsection{Proposed resource allocation algorithm}
\label{sec:51}
The concept of externalities requires us to adopt a suitable stability concept based on the idea of ``pairwise stability'' \cite{Eliz2011}. Before defining the pairwise stability, we first define a swap matching.
\begin{defn}
\label{def:2}
A \emph{swap matching} is formally defined as $\mathord{\eta_z^{k \leftrightarrow \tilde{k}} = \{ \eta_z \setminus \{(n,k),(\tilde{n},\tilde{k}) \}\} \cup \{ (k,\tilde{n}),(\tilde{k},n)\}}$. In each swap two V-UEs change their matching with their respective RBs while other matchings remain fixed.
\end{defn}
Having defined swap matching, we further define \emph{pairwise stability}. Given a matching $\eta_z$, pairs of V-UEs $k,\tilde{k}$ and RBs $n, \tilde{n}$ within a zone $z$, a pairwise matching is \emph{stable} if and only if there does not exist  pairs of V-UEs $(k,\tilde{k})$ such that:
\begin{enumerate}[(i)]
      \item $\mathord{\forall y \in \{k,\tilde{k}, n,\tilde{n} \}, \; U_{y,\eta_z^{k \leftrightarrow \tilde{k}}(y)}(\eta_z) \geq U_{y, \eta_z(y)}(\eta_z)}$
      \item $\mathord{\exists y \in \{k,\tilde{k},n,\tilde{n}\},\; U_{y,\eta_z^{k \leftrightarrow \tilde{k}}(y)}(\eta_z) > U_{y, \eta_z(y)}(\eta_z)}$
\end{enumerate}
yields $W_z(\eta_z^{k \leftrightarrow \tilde{k}}) > W_z(\eta_z)$.

A matching $\eta_z$ is said to be pairwise stable if there does not exist any V-UE $\tilde{k}$ or RB $\tilde{n}$, for which RB $n$ prefers V-UE $\tilde{k}$ over V-UE $k$ or any V-UE $k$ which prefers RB $\tilde{n}$ over $n$. From Definition \ref{def:2}, we can see that if two V-UEs swap between two RBs, the RBs involved in the swap must ``approve'' the swap. Similarly, if two RBs want to swap between two V-UEs, the V-UEs and RBs must agree to the swap. Such pairwise matching stability is reached by guaranteeing only stable swap-matching among players \cite{Eliz2011} which result in an increase in the zone-wide utility. In fact, due to externalities, players continuously change their preference orders, in response to the formation of other V-UE-RB which renders classical deferred acceptance solutions such as in \cite{EA2011, YWu2011} not applicable for our model. Therefore, to seek a stable resource allocation an Algorithm \ref{algo:algo2} is proposed, which depends on the swap which results in an increase of zone-wide utility.
\begin{table}[t]
\caption{Simulation Parameters}
\centering
\begin{tabular}{| p{5.5cm} | p{2.5cm} |}
\hline
\textbf{Parameter} & \textbf{Value} \\ \hline
Carrier frequency, RB bandwidth                         & $800$ MHz, $180$ KHz \\ \hline
Building breadth                                        & $100$~m \\ \hline
Noise power spectral density ($N_0$)                    & -$174$ dBm/Hz \\ \hline
Minimum safety distance                                 & [$15$~m $- 20$~m] \\ \hline
Range of neighborhood  ($\epsilon_d$)                   & $100$~m \\ \hline
Impact of neighborhood width ($\sigma_d$)               & $100$ \\ \hline
Impact of similarities ($\theta$)                       & $0.3$ \\ \hline
Impact of load and satisfied V-UEs ($\alpha, \beta$)    & $1$, $3$ \\ \hline
Vehicle speed                                           & $50$ km/h \\ \hline
V-UE transmission power                                 & $10$ dBm \\ \hline
V-UE target SINR $\overline{\gamma}_{n,k}(t)$           & $3$ dB \\ \hline
Channel model                                           & Rayleigh fading \\ \hline
Mean packet size $1/\mu_k(t)$              & $1600$ bytes \cite{METIS}\\ \hline
\end{tabular}
\label{tab:simulation}
\end{table}
In the first phase, each V-UE is initially allocated to a randomly selected RB. Next, the SINR is calculated at each V-UE, in order to calculate the time-average load. In the second phase, zones are formed among V-UE pairs based on the gaussian distance and cosine load similarities using Algorithm \ref{algo:algo1}. Allocation of RBs inside zone is performed based on the expected value of load at each zone. Then, the utilities of all players and utility of the zone is calculated for the given matching $\eta_z$. In the forth phase, V-UE and RBs update their respective utilities and individual preferences over one another. Subsequently, at each iteration, a chosen V-UE is swapped with other V-UE that depends on the increases in zone-wide utility.
\begin{lem}
Upon convergence of Phase IV, Algorithm \ref{algo:algo2} reaches a stable matching.
\end{lem}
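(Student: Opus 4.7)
The plan is to argue by contradiction, leveraging the explicit stopping rule of Phase IV, which only accepts a swap when it strictly increases the zone-wide utility $W_z(\eta_z)$. Let $\eta_z^\star$ denote the matching in zone $z$ at the moment Phase IV converges, i.e., the point at which no further proposed swap is accepted within the $\mathrm{count}_{\max}$ iterations. I want to show that $\eta_z^\star$ satisfies the two-condition pairwise-stability criterion stated just after Definition \ref{def:2}.

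First, I would unpack what convergence of Phase IV buys us. Because the algorithm terminates only when every candidate swap pair $\{k,\tilde k\}$ in $\mathcal{K}_z$ either (i) is rejected by one of the RBs $n,\tilde n$ involved, or (ii) fails to improve $W_z$, the final state has the property that for \emph{every} pair $\{k,\tilde k\}\subset \mathcal{K}_z$ and the associated RBs $n=\eta_z^\star(k)$, $\tilde n=\eta_z^\star(\tilde k)$, the swap matching $\eta_z^{\star,\,k\leftrightarrow\tilde k}$ does not yield $W_z(\eta_z^{\star,\,k\leftrightarrow\tilde k})>W_z(\eta_z^\star)$. This is the operational characterization I will feed into the stability definition.

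Next, suppose toward a contradiction that $\eta_z^\star$ is not pairwise stable. Then by the stability definition there exist V-UEs $k,\tilde k\in\mathcal{K}_z$ and RBs $n,\tilde n\in\mathcal{N}_z$ for which conditions (i)--(ii) hold and additionally $W_z(\eta_z^{\star,\,k\leftrightarrow\tilde k})>W_z(\eta_z^\star)$. However, Phase IV selects pairs of V-UEs randomly and, over a sufficiently long run of $\mathrm{count}_{\max}$ trials, evaluates every such candidate swap; once the beneficial pair $(k,\tilde k)$ is drawn, the condition $W_z(\eta_z^{\star,\,k\leftrightarrow\tilde k})>W_z(\eta_z^\star)_{\mathrm{best}}$ in the \textbf{if} branch of Algorithm \ref{algo:algo2} would trigger acceptance of the swap, contradicting the assumption that the algorithm has converged at $\eta_z^\star$. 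Hence no such blocking pair exists, and $\eta_z^\star$ is pairwise stable. The same argument applied independently to every zone $z\in\mathcal{Z}$ yields a stable network-wide matching $\boldsymbol{\eta}^\star$.

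The main obstacle in making this rigorous is the presence of \emph{externalities}: because each $U_{n,k}(\eta_z)$ depends on the full matching through the SINR in \eqref{eq:sinr}, a swap between $\{k,\tilde k\}$ alters the utilities of all other V-UEs sharing the same RBs in the zone, and the preference orders $\succ_k,\succ_n$ are not fixed throughout the run. I would handle this by framing stability exclusively through the zone-wide utility $W_z$ (as the pairwise-stability definition does via the clause ``yields $W_z(\eta_z^{k\leftrightarrow\tilde k})>W_z(\eta_z)$''), so that the contradiction step refers only to the scalar $W_z$ rather than to individual preference lists, sidestepping the cyclic-preference issue that undermines classical deferred-acceptance arguments such as those in \cite{EA2011,YWu2011}. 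A secondary subtlety is that the lemma conditions on convergence rather than establishing it; I would note that since each accepted swap strictly increases $W_z$ and $W_z=-\Gamma_z$ is bounded above by $0$, the sequence of accepted updates is monotone and bounded, so Phase IV terminates in finitely many accepted steps, justifying the ``upon convergence'' hypothesis.
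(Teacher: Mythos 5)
Your proposal is correct and follows essentially the same route as the paper's own proof: both treat the zone-wide utility $W_z(\eta_z)$ as a potential function that strictly increases with every accepted swap, so that termination of Phase IV coincides with the absence of any improving (blocking) swap pair, which is exactly the pairwise-stability condition. Your write-up is simply a more careful rendering of the same argument—adding the explicit contradiction framing, the boundedness of $W_z = -\Gamma_z$, and the observation that externalities are absorbed by defining stability through the scalar $W_z$—all of which the paper asserts more tersely via the finiteness of possible swaps.
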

\begin{proof}
The proof follows based on following considerations. First, due to limited number of V-UEs and RBs inside zone, the number of possible swaps is finite. Moreover, only the swaps which strictly improves the zone-wide utility strictly can permitted. In addition, considering all the possible swaps each V-UE is associated to its most preferred RB and vice versa. Therefore, Phase IV, terminates, when no further improvement in zone-wide utility is achieved by all possible swaps among players. Therefore, Algorithm \ref{algo:algo2} converges to a stable matching after a finite number of iterations.
\end{proof}

\begin{figure}[t]
\centering
\includegraphics[width=0.95\linewidth]{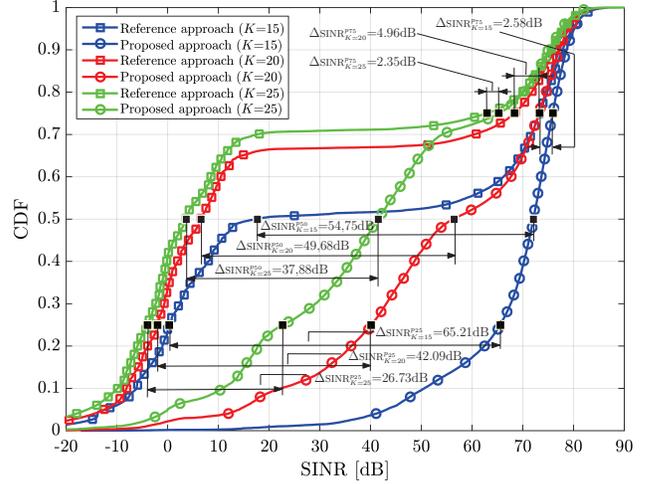}
\caption{Cumulative density function of V-UE's SINR for different densities of V-UE pairs under the considered approaches with $\Delta${\small SINR$_{K=x}^{\boldsymbol{p}_{y}}$} being the SINR gains offset for $K$ pairs in percentile $\boldsymbol{p}_y=[25,50,75]$.}
\label{fig:SINR-cdf}
\end{figure}
\vspace{-0.1cm}
\section{Simulation Results}
\label{sec:4}
For simulations, we consider a single cell in which a number of V-UE pairs are uniformly distributed over a Manhattan grid model. The radio resources are organized in $15$ RBs dedicated for V2V communication. Four building are deployed, each of them with the breadth of $100$~m, spaced by two lanes for bi-directional traffic mobility. We consider four types of vehicles with different length and widths. Due to presence of the building each vehicle encounter LOS and NLOS instances. The path loss model for V2V communication of LOS and NLOS is computed according to the Berg recursive model \cite{METIS}. V-UEs are traveling at a fixed speed of $50$ km/h along the defined roads. The positions of the V-UEs in the cell are assumed to be exactly known. The simulation parameters are given in Table \ref{tab:simulation}. In order to compare the proposed approach with the idea proposed in state-of-art \cite{MBos2014}, total area is divided into fixed equal size zones such that fixed number of RBs are allocated to each zone while zones having more load in terms of traffic gets more RBs. The performance of the proposed approach is analyzed under the different load conditions. Zones are formed after each $10$ time slots, and simulation is run for $T_{\textrm{max}}=60$ seconds.

Fig. \ref{fig:SINR-cdf} shows the cumulative density function of the V-UEs' SINR for a fixed number of RBs $N=15$ and varying density of V-UE $K$ pairs. From Fig. \ref{fig:SINR-cdf}, we can see that, the proposed approach significantly improves the SINR. Indeed, Fig. \ref{fig:SINR-cdf} shows that there are less then $6.2\%$ V-UEs when $K=25$ that do not satisfy the target SINR due to NLOS, but overall in the proposed approach significantly outperforms the reference approach in $25$-th, $50$-th and $75$-th percentiles of V-UE SINR.\\
\begin{figure}[t]
\centering
\includegraphics[width=0.90\linewidth]{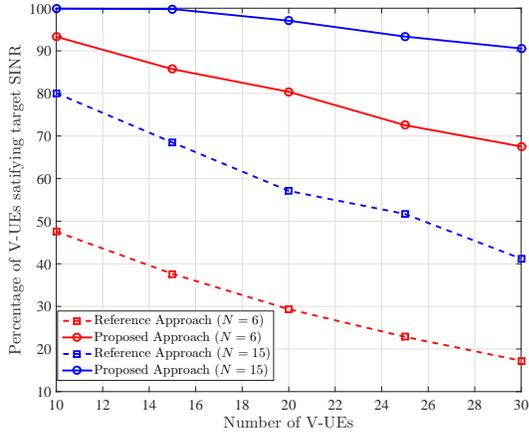}
\caption{Percentage of V-UEs salifying the target SINR for different densities of V-UEs and different number of RBs under the considered approaches.}
\label{fig:per-satisfied}
\end{figure}
In Fig. \ref{fig:per-satisfied}, we show the percentage of V-UEs which achieved the desired target SINR for different number of V-UEs and RBs. From this figure, we can see that the proposed approach significantly reduces the number of outages compared to the reference scheme for all network sizes. Fig. \ref{fig:per-satisfied} shows that, almost $100\%$ V-UEs pairs satisfying the target SINR in the case of $10$ and $15$ V-UE pairs when $N=15$ RBs. The advantage of the proposed approach reaches up to $49\%$ and $50\%$ of gain when compared to the baseline approach with $K=30$ V-UEs for $N=6$ and $N=15$ RBs, respectively.

Fig. \ref{fig:iterations} shows the number of swap iterations required for the convergence of algorithm for the same scheduling time instance. Due to dynamic nature of the zone formation, number of zones varies over time. Each zone tries to maximize its utility locally. In this figure, we can see that for lower number of V-UE pairs, fewer number of iterations are required. As increases in the number of V-UE pairs inside zone, it requires more iterations for convergence. From Fig. \ref{fig:iterations} we can also observe that, the proposed approach requires reasonable number of iterations for convergence.
\begin{figure}[t]
\centering
\includegraphics[width=0.90\linewidth]{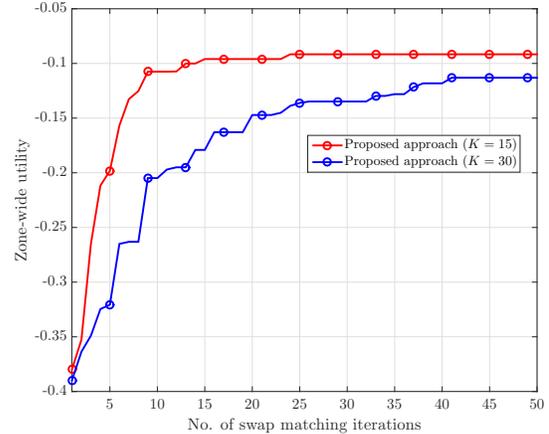}
\caption{Number of swap matching iterations per zone for different densities of V-UEs at the same scheduling time instance}
\label{fig:iterations}
\end{figure}
\vspace{-0.15cm}
\section{Conclusions}
\label{sec:5}
In this paper, we have presented a novel, a dynamic proximity-aware resource allocation scheme for V2V communications. Above all, the aim of proposed work is to satisfy the requirements of V2V safety services while reducing the signaling overhead and interference from other pairs by enabling zone formation. We have formulated the problem as a matching game with externalities in which the V-UEs and RBs build preferences over one another so as to choose their own utilities. Simulation results have shown that the proposed approach provides considerable gains in terms of increased SINR and higher percentage of V-UEs satisfying QoS with respect to a state-of-the-art resource allocation baseline.
\section*{Acknowledgment}
This research was supported  by TEKES grant 2364/31/2014 and the Academy of Finland (CARMA) and the U.S. National Science Foundation under grants CNS-1460316 and ACI-1541105. We also acknowledge Sumudu Samarakoo and Petri Luoto for their fruitful discussions.

\vspace{-0.1cm}
\renewcommand{\baselinestretch}{0.85}

\end{document}